\DeclareMathAlphabet{\mathpzc}{OT1}{pzc}{m}{it}
\newtheorem{theorem}{Theorem}
\begin{document}
%

% paper title
% can use linebreaks \\ within to get better formatting as desired
\title{A New Method for Variable Elimination in Systems of Inequations}
%\thanks{This work was partially supported by Iranian National Science Foundation
%(INSF) under Contract No. 88114.46 and by Iran Telecom Research Center (ITRC).}

% author names and affiliations
% use a multiple column layout for up to three different
% affiliations
\author{\IEEEauthorblockN{Farhad Shirani Chaharsooghi, Mohammad Javad Emadi, Mahdi Zamanighomi and Mohammad Reza Aref }
\IEEEauthorblockA{Information Systems and Security Lab (ISSL)\\
Electrical Engineering Department, Sharif University of Technology, Tehran, Iran.\\
Email: fshirani@ee.sharif.edu, emadi@ee.sharif.edu, m\_zamani@ee.sharif.edu, aref@sharif.edu}}

%\IEEEauthorblockN{Homer Simpson}
%\IEEEauthorblockA{Twentieth Century Fox\\
%Springfield, USA\\
%Email: homer@thesimpsons.com}
%\and
%\IEEEauthorblockN{James Kirk\\ and Montgomery Scott}
%\IEEEauthorblockA{Starfleet Academy\\
%San Francisco, California 96678-2391\\
%Telephone: (800) 555--1212\\
%Fax: (888) 555--1212}}

% conference papers do not typically use \thanks and this command
% is locked out in conference mode. If really needed, such as for
% the acknowledgment of grants, issue a \IEEEoverridecommandlockouts
% after \documentclass

% for over three affiliations, or if they all won't fit within the width
% of the page, use this alternative format:
%
%\author{\IEEEauthorblockN{Michael Shell\IEEEauthorrefmark{1},
%Homer Simpson\IEEEauthorrefmark{2},
%James Kirk\IEEEauthorrefmark{3},
%Montgomery Scott\IEEEauthorrefmark{3} and
%Eldon Tyrell\IEEEauthorrefmark{4}}
%\IEEEauthorblockA{\IEEEauthorrefmark{1}School of Electrical and Computer Engineering\\
%Georgia Institute of Technology,
%Atlanta, Georgia 30332--0250\\ Email: see http://www.michaelshell.org/contact.html}
%\IEEEauthorblockA{\IEEEauthorrefmark{2}Twentieth Century Fox, Springfield, USA\\
%Email: homer@thesimpsons.com}
%\IEEEauthorblockA{\IEEEauthorrefmark{3}Starfleet Academy, San Francisco, California 96678-2391\\
%Telephone: (800) 555--1212, Fax: (888) 555--1212}
%\IEEEauthorblockA{\IEEEauthorrefmark{4}Tyrell Inc., 123 Replicant Street, Los Angeles, California 90210--4321}}

% use for special paper notices
%\IEEEspecialpapernotice{(Invited Paper)}

% make the title area

\maketitle

\begin{abstract}
%\boldmath
 In this paper, we present a new method for variable elimination in systems of inequations which is much faster than the Fourier-Motzkin Elimination (FME) method. In our method, a linear Diophantine problem is introduced which is dual to our original problem. The new Diophantine system is then solved, and the final result is calculated by finding the dual inequations system. Our new method uses the algorithm Normaliz to find the Hilbert basis of the solution space of the given Diophantine problem. We introduce a problem in the interference channel with multiple nodes and solve it with our new method. Next, we generalize our method to all problems involving FME and in the end we compare our method with the previous method. We show that our method has many advantages in comparison to the previous method. It does not produce many of the redundant answers of the FME method. It also solves the whole problem in one step whereas the previous method uses a step by step approach in eliminating each auxiliary variable.
\end{abstract}
%In information theory problems it is common to split the message into separate parts, each with their corresponding rate. The total rate is equal to the sum of these rates. Since we are only interested in the bounds on the total rate, it is necessary to eliminate the auxiliary rates in the given system of inequations. The traditional method for carrying out this task is the Fourier-Motzkin Elimination Method. The Fourier-Motzkin Method has a very high complexity and is not feasible for problems in high dimensions.
% IEEEtran.cls defaults to using nonbold math in the Abstract.
% This preserves the distinction between vectors and scalars. However,
% if the conference you are submitting to favors bold math in the abstract,
% then you can use LaTeX's standard command \boldmath at the very start
% of the abstract to achieve this. Many IEEE journals/conferences frown on
% math in the abstract anyway.

% no keywords

% For peer review papers, you can put extra information on the cover
% page as needed:
% \ifCLASSOPTIONpeerreview
% \begin{center} \bfseries EDICS Category: 3-BBND \end{center}
% \fi
%
% For peerreview papers, this IEEEtran command inserts a page break and
% creates the second title. It will be ignored for other modes.
\IEEEpeerreviewmaketitle

\section{Introduction}
	The FME Method was first introduced by Fourier \cite{six} in 1827 and was rediscovered in Motzkin's thesis in 1936 \cite{seven}.   Considering the set of inequations $Ax\leq b, A\in R^{m,n}, b \in R^n$ the method provides a scheme for  determining the existence of a real or integer solution to the given problem. The method uses a step by step algorithm, in which each step
reduces the dimension of the answer space by projecting the answer space on a hyperplane. The procedure continues until the dimension is reduced to one; if the one dimensional answer is feasible, then the solution exists. A dual mode for FME was introduced in 1972 by Dantzik \cite{eight}.

	 	The FME method is widely used in solving rate region problems since there is often the need to eliminate the auxiliary variables introduced into the problem from the resulting inequations and hence reduce the dimension of the solution space by finding its projection on a given plane. The complexity of calculation in each step is of the order of $\frac{k^2}{4}$, with $k$ being the number of constraints in the system of inequalities. This complexity makes the method unfeasible for high dimensional problems. There have been many attempts to decrease the calculation time of the method. In 1996, Kebler \cite{nine} introduced a parallel computation method which was more efficient when solving the elimination problem on a computer.

In this paper, we present a new method for eliminating auxiliary variables in inequations which is much faster than the FME method. In our method, we first introduce a linear Diophantine problem which is dual to the original problem. We then proceed by solving this new Diophantine system, next we calculate the final result by finding the dual inequations system. Our new method uses the algorithm Normaliz to find the Hilbert basis of the solution space of the given Diophantine problem.

 Normaliz was first introduced as a program in 1997 by R. Koch. In 2003, the algorithm was presented in Koch's thesis \cite{ten}. Normaliz was originally introduced to compute the lattice points in a lattice polytope and the integral closure of a monomial ideal. Koch mentions in his thesis that the algorithm may be extended to find the Hilbert basis of the solution space of a system of linear Diophantine equations. However, applying the algorithm introduced in \cite{ten} to information theoritic problems is still time-consuming. In 2010, Bruns and Ichim \cite{eleven} introduced a dual algorithm which is much faster when applied to problems with a greater number of constraints than auxiliary variables. Since this is the case in almost all information theory problems, we use this method to find the desired Hilbert Basis solution.

One of the most important fundamental channels in communication systems is the Interference Channel (IC). The IC was initiated by Shannon \cite{one}. Simple outer bounds for the IC were established by Ahlswede \cite{two}. Later, Carlieal \cite{three}, using superposition encoding and sequential decoding, characterized an achievable rate region for general discrete memoryless IC. In the IC each receiver is only interested in the message from its corresponding sender. However, it might decode messages from other senders to reduce the interference effect. Han and Kobayashi (HK) utilized rate splitting techniques in \cite{four}, and applied simultaneous superposition coding techniques to establish the largest inner bound introduced for the general IC to date. Computing the capacity region of general IC is still an open problem, but in special cases such as strong interference regime the capacity region has been established, and the region was the same as the HK rate region \cite{five}. In the method used to achieve the HK rate region, the messages are split into two parts, the private and common parts; private messages are decoded by the corresponding receiver, while the common messages are decoded by all receivers.

 Rate splitting has become a prominent tool for solving rate region problems. However, there is a drawback to the method in that solving the channel with the new split messages yields a system of inequations based on the split-rates. Since we are interested in each sender's total rates, the auxiliary variables must be eliminated. The FME method is needed to eliminate these auxiliary rates.
%In this paper we use our new method to solve  the $l$ transmitter-reciever interference channel.

The rest of the paper is organized as follows: In Section II, we first introduce our model of an $l$ transmitter-receiver interference channel. In Section III, we introduce a problem in the IC. In this section, we use some simplifying assumptions, since the goal of the section is to explain the new elimination method. In Section IV, we solve the problem with our new method. In Section V, we present a generalized algorithm for our new method which could be applied to any problem involving FME. In Section VI, we present results of applying our new method to several problems which were solved previously using the FME method. Section VII concludes the paper.

% no \IEEEPARstart
% You must have at least 2 lines in the paragraph with the drop letter
% (should never be an issue)

%\hfill mds

%\hfill January 11, 2007

\section{Channel Moddel}
In this section, we introduce our model of IC. We denote random variables by upper case letters e.g. X, their realizations are presented by lower case letters e.g. x in finite sets denoted by $\mathpzc{X}$. We use $x^k_i$ to denote the vector $x_i,x_{i+1}, ... , x_k$. We denote $x^k$ if i=1, while for i=1 and k=$l$, we show the vector by $\underline{x}$. Also, if i=1 and k=$2l$, we show the vector by \underline{\underline{x}}. Furthermore $\underline{x}_i^n$ is the compact form of $(\underline{x}_i,\underline{x}_{i+1},...,\underline{x}_n)$ where $\underline{x}_i=(x_{i1},...,x_{il})$.

A discrete Interference Channel, with $\mathit{l}$ senders and $\mathit{l}$ receivers is the 2$\mathit{l}$+1 tuple
$(\underline{\mathpzc{X}},p(\underline{y}|\underline{x}),\underline{\mathpzc{Y}})$
%($\mathpzc{X}_1$.$\mathpzc{X}_2$,...,$\mathpzc{X}_n$,$p(y_1y_2...y_n|x_1x_2...x_n)$,$\mathpzc{Y}_1$,$\mathpzc{Y}_2$,...,$\mathpzc{X}_n$)
where $\mathpzc{X}_i$ are the input alphabets, $\mathpzc{Y}_i$ are the output alphabets for i=1,2,...,$\mathit{l}$ and $p(\underline{y}|\underline{x})$ is the conditional probability of the channel. Each decoder is only interested in decoding its corresponding sender's message; however, in order to reduce the interference effect, receivers might decode the message from other senders as well. The channel is memoryless and time invariant in the sense that
\begin{equation*}
p(\underline{y}_n|\underline{x}^n,\underline{y}^{n-1})=p_{\underline{Y}|\underline{X}}(\underline{y}_n|\underline{x}_n).
\end{equation*}
A $((2^{nR_1},2^{nR_2},...,2^{nR_\mathit{l}}),n)$ code consists of (i) $\mathit{l}$ message sets $\mathpzc{W}_i={1,2,...,2^{nR_i}},i\in[1:\mathit{l}]$, (ii) $\mathit{l}$ encoding functions $f_u=\mathpzc{W}_u\rightarrow \mathpzc{X}^n_u, u\in[1:\mathit{l}]$, (iii)$\mathit{l}$ decoding functions $g_u(.)$ where $\hat{W}_u=g_u(Y_u^n),u\in[1:\mathit{l}]$. We define the average probability of error for this code as $P_e^{(n)}=Pr\{\bigcup_{i=1}^\mathit{l}(\hat{W_i}\neq W_i)\}$. A nonnegative rate $\mathit{l}$-tuple $(R_1,R_2,...,R_\mathit{l})$ is said to be achievable if there exists a code $((2^{nR_1},2^{nR_2},...,2^{nR_\mathit{l}}),n)$ for which $P_e^{(n)}\rightarrow 0$ as $n$ tends to infinity.
\section{Problem Statement}
In this section, we present our coding strategy which is a generalization of the method used in \cite{four}. We describe the problem in solving the resulting inequations with the conventional FME method. After stating the problem, we introduce our method and compare the results with the former procedure.

Since our goal in this paper is to introduce an alternative elimination method for the FME method, we use some simplifying assumptions in the channel. We consider the channel to be symmetric in the sense that the $i$th sender sends the same amount of information to all receivers other than its own corresponding receiver. This allows each sender to send the same common message to all receivers. We split each message $W_i$ into common and private parts $(W_{ic},W_{ip})$. The common part of the message is decoded by all receivers while the private part is decoded only by the $i$th receiver. Another suitable assumption would have been to assume a sense of degradedness between the receivers in the following way: assume that receiver $i_1$ receives all the common massages sent by sender $i$; receiver $i_2$ receives all the common messages received by the receivers other than $i_1$, and so on. In this case we had to split the message in each sender to $\mathit{l}$ common parts and one private part in the following way: split the message $W_i$ into $(W_{i(1c)},W_{i(2c)},...,W_{i(\mathit{l}c)},W_{ip}$). $W_{i(\mathit{l}c)}$ is decoded by the receiver with the least information about the massage from sender $i$; the reciever with the second least information about this message decodes $W_{i(\mathit{l}c)},W_{i(\mathit{(l-1)}c)}$, and so on. The rest of the solution would stay the same. Clearly, the symmetric case mentioned above is a special case of this latter case. Since the two cases have the same solution, and for the sake of simplicity, we solve the channel for the symmetric case in this paper.
\\\textbf{Codebook Generation:}

 Let $Z_1=(U_1,U_2,...,U_\mathit{l},X_1,X_2,...,X_\mathit{l},Y_1,Y_2,...,Y_\mathit{l},Q)$, where Q is the time-sharing variable. Let $\mathpzc{P}_1$ denote the set of all jont p.m.fs $p_1$ on $Z_1$, which can be written in the form
\begin{equation*}
p(z_1)=p(q)\prod_{i=1}^l (p(u_i|q)p(x_i|u_i,q))p(y_1,y_2,...,y_l|x_1,x_2,...,x_\mathit{l})
\end{equation*}
For any $p_1$ defined above, generate $2^{nR_{ic}},i\in[1:l]$, i.i.d $u^n_{ic}(w_{ic})$ where $w_{ic}\in[1:2^{nR_{ic}}]$. For each $u^n_{ic}(w_{ic})$, generate $2^{nR_{ip}}$, i.i.d $x^n_{i}(w_{ip},w_{ic})$. The time-sharing variable has no impact on the rest of the proof and for the sake of brevity, we ignore it until the end of the proof where we bound its cardinality.
\\\textbf{Encoding:}

At the beginning of each block, the $i$th sender picks related $(u^n_{ic},x^n_{i})$ according to $\prod^n_{j=1} p(u_{ic,j})p(x_{i,j}|u_{ic,j})$ and sends it.
\\\textbf{Decoding:}

At the end of each block, the $i$th receiver finds $(x^n_i,u^n_{1c},u^n_{2c},...,u^n_{lc})$ such that:
\begin{equation*}
(x^n_i(w_{ip},w_{ic}),u^n_{1c}(w_{1c}),u^n_{2c}(w_{2c}),...,u^n_{lc}(w_{lc}))\in A_\epsilon
\end{equation*}
Without loss of generality we consider the case where all senders have sent index 1 as their messages. We investigate all possible scenarios in the first decoder. The decoder must decode messages $(w_{1p},w_{1c},w_{2c},...,w_{lc})$, the decoding is errorless if $\hat{w}=(\hat{w}_{1p},\hat{w}_{1c})=(1,1)$. Using the same method as in \cite{twelve} and by the AEP rule, it is clear that the probability of error approaches zero as $n\rightarrow\infty $ (e.g. rate $l$-tuple is achievable) iff the rates satisfy the following set of equations:
\begin{equation}
R_{1p}+\alpha_1R_{1c}+\alpha_2R_{2c}+...+\alpha_lR_{lc}\leq I_{1,\alpha_1,\alpha_2,...,\alpha_l}
\label{eq}
\end{equation}
where
\begin{equation*}
\alpha_i\in\{0,1\}
\end{equation*}

\begin{equation*}
I_{1,\alpha_1,\alpha_2,...,\alpha_l}=I(X_1,U_S;Y_1|U_{S^c})
\end{equation*}

and $U_i\in U_S$ if $\alpha_i=1$; also, for $\alpha_i=0$, we have $U_i\in U_{S^c}$.
We note that $R_1-R_{1c}=R_{1p}$, substituting into \eqref{eq} we have
\begin{equation*}
R_1-\alpha_1R_{1c}+\alpha_2R_{2c}+...+\alpha_lR_{lc}\leq I_{1,\alpha_1,\alpha_2,...,\alpha_l},\alpha_i\in\{0,1\}
\end{equation*}
These are the inequations for decoder 1. We rewrite the inequations for all decoders in matrix form:
\begin{equation}
\left (
\begin{matrix}
1 & 0 & \ldots & 0 & 0\\
1 & 0 & \ldots & 0 & 0\\
\vdots & \vdots & \ddots & 0 & 0\\
1 & 0 & \ldots & 0 & 0\\
0 & 1 & \ldots & 0 & 0\\
0 & 1 & \ldots & 0 & 0\\
\vdots & \vdots & \ddots & 0 & 0\\
0 & 1 & \ldots & 0 & 0\\
\vdots & \vdots & \ddots\\
\vdots & \vdots & \ddots\\
0 & 0 &\ldots & 0 & 1 \\
0 & 0 &\ldots & 0 & 1 \\
\end{matrix} \right |
\left .
\begin{matrix}
\\
& & A_1  \\
 \\
\\
\_&\_&\_&\_\\
\\
  & &   A_2  \\
\\
\_&\_&\_&\_\\
\vdots & \vdots& \vdots & \ddots\\
\_&\_&\_&\_\\
\\
  & &   A_l \\

\\
\end{matrix} \right)
\left(
\begin{matrix}
R_{1}\\
R_{2}\\
\\
\vdots\\
\\
\vdots\\
R_{l}\\
R_{1c}\\
R_{2c}\\
\vdots\\
\\
\vdots\\
\\
R_{lc}\\
\end{matrix} \right)
\leq
\left(
\begin{matrix}
I_{1,00...0}\\
I_{1,00...1}\\
\vdots\\
I_{1,11...1}\\
I_{2,00...0}\\
I_{2,00...1}\\
\vdots\\
I_{2,11...1}\\
\vdots\\
\vdots\\
I_{l,00...0}\\
\\
\\
I_{l,11...1}\\
\label{eqq}
\end{matrix} \right)
\end{equation}
If we denote A as the matrix whose rows are all possible binary $l$-tuples starting from (0, 0, ..., 0) in an increasing order, then $A_i$ is the matrix formed by multiplying the $i$th column in A by -1. We show the rate coefficients matrix in \eqref{eqq} with $C_{l}$. We show the rate vector with $\underline{\underline{R}}^{t}$, and the mutual information vector with $I_l$, so we may write \eqref{eqq} in compact form as $C_{l}\underline{\underline{R}}^{t}\leq I_l$.
To clarify the method of forming the above mentioned inequations, we give the following example, which illustrates the inequations for a two sender-receiver interference channel:
\begin{equation*}
C_{2}\underline{\underline{R}}^{t}\leq I_2 \Rightarrow
\left (
\begin{matrix}
1 & 0 \\
1 & 0 \\
1 & 0 \\
1 & 0 \\
0 & 1 \\
0 & 1 \\
0 & 1 \\
0 & 1 \\
\end{matrix} \right |
\left .
\begin{matrix}
0 & 0 \\
0 & 1 \\
-1 & 0 \\
-1 & 1 \\
0 & 0 \\
0 & -1 \\
1 & 0 \\
1 & -1 \\
\end{matrix} \right)
\left(
\begin{matrix}
R_{1}\\
R_{2}\\
R_{1c}\\
R_{2c}\\
\end{matrix} \right)
\leq
\left(
\begin{matrix}
I_{1,00}\\
I_{1,01}\\
I_{1,10}\\
I_{1,11}\\
I_{2,00}\\
I_{2,01}\\
I_{2,10}\\
I_{2,11}\\
\end{matrix} \right)
\end{equation*}
Note that these are the same inequations given in \cite{twelve}. To solve this system with the conventional FME method, one must eliminate all of the variables $R_{ic}, i\in[1:l]$; this process takes $l$ rounds of applying FME to the matrix. According to \cite{twelve}, the size of the matrix grows at a worst case rate of $\frac{k^2}{4}$ each round. This complexity renders the aforementioned method impractical for $l$ larger than 2, since, after applying FME, one must eliminate the redundant inequations in the resulting matrix using an exhaustive search. Finding the redundant inequalities often needs more time than eliminating the auxiliary variables, which itself is a time-consuming process.
\section{Our New Method}

 Here, we present a new method for solving \eqref{eqq} which not only requires much less calculation in comparison with the FME but also produces no redundant inequations assumning the mutual-informations on the right side of the inequations are independant. Before proceeding with the solution, we present some definitions. We show the matrixes resulting from $l$ rounds of FME by $C^*_l$ and $I^*_l$, so the resulting inequations can be written in the form $C^*_l\underline{\underline{R}}^t\leq I^*_l$. Note that the second $l$ columns of $C^*_l$ are zero. It is also clear that if we show the rows of $C_l$ by $C_{l(l,\alpha_1,\alpha_2,...,\alpha_l)}$, then all rows of $C^*_l$ and $I^*_l$ can be shown by
\begin{equation*}
C^*_{l(k)}=\sum_{i,\alpha_1,\alpha_2,...,\alpha_l}a_{k(i,\alpha_1,\alpha_2,...,\alpha_l)} C_{l(i,\alpha_1,\alpha_2,...,\alpha_l)},
\end{equation*}
\begin{equation*}
I^*_{l(k)}=\sum_{i,\alpha_1,\alpha_2,...,\alpha_l}a_{k(i,\alpha_1,\alpha_2,...,\alpha_l)} I_{l(i,\alpha_1,\alpha_2,...,\alpha_l)},
\end{equation*}
where
\begin{equation*}
\alpha_i\in\{0,1\},i\in[1:l],a_{k(i,\alpha_1,\alpha_2,...,\alpha_l)} \geq 0
\end{equation*}
Let $G_{C_l}$ be the solution space of the inequations in \eqref{eqq} for rates $R_1,R_2,...,R_l$, and let $G_{C^*_l}$ be the solution space of $C^*_l\underline{\underline{R}}^t\leq I^*_l$. Furthermore, we define the matrix D as the matrix whose rows are formed by all of the linear combinations of $C_{l(i,\alpha_1,\alpha_2,...,\alpha_l)}$ with positive coefficients and with the property that the second $l$ columns of D are zero. In other words, $\underline{\underline{D}}$ is a row of D iff
\begin{equation}
\underline{\underline{D}}=\sum_{i,\alpha_1,\alpha_2,...,\alpha_l}a_{i,\alpha_1,\alpha_2,...,\alpha_l} C_{l(i,\alpha_1,\alpha_2,...,\alpha_l)}
\label{D}
\end{equation}
with the following conditions:
\begin{equation*}
a_{k(i,\alpha_1,\alpha_2,...,\alpha_l)} \geq 0
 \end{equation*}
\begin{equation}
D_{l+1}^{2l}=0;
\label{column}
 \end{equation}
We define $G_{D_l}$ as the solution space of the following inequations:
\begin{equation}
D_l\underline{\underline{R}}^t\leq I_{D_l}
\label{eqqq}
\end{equation}
where
\begin{equation*}
I_{D_l}=\sum_{i,\alpha_1,\alpha_2,...,\alpha_l}a_{i,\alpha_1,\alpha_2,...,\alpha_l} I_{l(i,\alpha_1,\alpha_2,...,\alpha_l)}\end{equation*}
\begin{theorem}
$G_{C^*_l}$ is equal to $G_{D_l}$.
\end{theorem}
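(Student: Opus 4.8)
The plan is to prove the two inclusions $G_{D_l}\subseteq G_{C^*_l}$ and $G_{C^*_l}\subseteq G_{D_l}$ separately, organizing everything around the polyhedron
$P=\{\underline{\underline{R}}:\,C_l\underline{\underline{R}}^{t}\le I_l\}\subseteq\mathbb{R}^{2l}$
and its projection onto the rate coordinates $R_1,\dots,R_l$. The slogan is that both $G_{C^*_l}$ and $G_{D_l}$ are nothing but this projection, so they are equal; the two inclusions of the theorem will be read off from that.

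The inclusion $G_{D_l}\subseteq G_{C^*_l}$ is pure bookkeeping. The excerpt records that each row of $C^*_l$ has the form $C^*_{l(k)}=\sum_{i,\alpha_1,\dots,\alpha_l}a_{k(i,\alpha_1,\dots,\alpha_l)}C_{l(i,\alpha_1,\dots,\alpha_l)}$ with nonnegative coefficients, and that its last $l$ columns vanish; hence $C^*_{l(k)}$ satisfies conditions \eqref{D}--\eqref{column} and is therefore itself a row of $D_l$, whose associated right-hand side is exactly $\sum a_{k(\cdots)}I_{l(\cdots)}=I^*_{l(k)}$. Thus every inequality of the system $C^*_l\underline{\underline{R}}^{t}\le I^*_l$ literally occurs among the inequalities $D_l\underline{\underline{R}}^{t}\le I_{D_l}$, so any point satisfying all of the latter satisfies all of the former.

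For the reverse inclusion I would take $R\in G_{C^*_l}$ and invoke correctness of Fourier--Motzkin elimination for the $l$ rounds that transform $C_l,I_l$ into $C^*_l,I^*_l$ (see \cite{six,seven,eight,twelve}): a rate vector satisfies the eliminated system if and only if it extends to a point of $P$, so there are auxiliary rates $R_{1c},\dots,R_{lc}$ with $C_{l(\cdots)}\underline{\underline{R}}^{t}\le I_{l(\cdots)}$ for every row. Now let $\underline{\underline{D}}=\sum a_{i,\alpha_1,\dots,\alpha_l}C_{l(i,\alpha_1,\dots,\alpha_l)}$ be an arbitrary row of $D_l$, with $a_{(\cdots)}\ge 0$ and vanishing last $l$ columns. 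Taking the same nonnegative combination of the inequalities $C_{l(\cdots)}\underline{\underline{R}}^{t}\le I_{l(\cdots)}$ yields $\underline{\underline{D}}\,\underline{\underline{R}}^{t}\le\sum a_{(\cdots)}I_{l(\cdots)}=I_{D_l}$; because the last $l$ entries of $\underline{\underline{D}}$ are zero, the left-hand side depends only on $R_1,\dots,R_l$, so this is precisely the corresponding inequality of \eqref{eqqq}. Since $\underline{\underline{D}}$ ranged over all rows of $D_l$, we get $R\in G_{D_l}$.

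The main obstacle is the step that uses correctness of FME: the easy (``soundness'') half says every derived inequality is valid on the projection, but what is actually needed here is \emph{completeness} --- that the $l$ elimination rounds generate enough nonnegative combinations that the solution set of the eliminated system is not strictly larger than the true projection of $P$. If a self-contained argument is preferred, this can be replaced by a direct proof that $\mathrm{proj}(P)=G_{D_l}$ via the projection lemma / affine Farkas: if a fixed $R\in G_{D_l}$ failed to extend to a point of $P$, the resulting system in the auxiliary variables $R_{1c},\dots,R_{lc}$ would be infeasible, and Farkas' lemma would produce a nonnegative multiplier vector annihilating the auxiliary columns whose induced inequality on $R_1,\dots,R_l$ is violated --- but such an inequality is by definition a row of $D_l$, a contradiction. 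Chaining this with the trivial inclusion $\mathrm{proj}(P)\subseteq G_{D_l}$ and with $G_{D_l}\subseteq G_{C^*_l}\subseteq\mathrm{proj}(P)$ closes the loop. Either way, the Farkas-type lifting of a projected point back into $P$ is the crux; the remaining manipulations are just algebra with nonnegative combinations.
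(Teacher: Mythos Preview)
Your proof is correct and follows essentially the same route as the paper's: the inclusion $G_{D_l}\subseteq G_{C^*_l}$ is obtained in both by observing that every row of $C^*_l$ is already a row of $D_l$, and the reverse inclusion is obtained by combining the correctness of FME ($G_{C_l}=G_{C^*_l}$) with the fact that every inequality of $D_l$ is a nonnegative combination of the original inequalities and hence valid on $G_{C_l}$. Your additional discussion of the Farkas-lemma alternative is more than the paper provides, but the core argument is identical.
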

\begin{proof}The proof is straightforward. Since all rows of $D_l$ are linear combinations of rows of $C_l$ with positive coefficients, all inequations in \eqref{eqqq} result from inequations in \eqref{eqq}. Hence, we have $G_{C_l}\subseteq G_{D_l}$, and by the FME, we know that $G_{C_l}$ is equal to $G_{C^*_l}$ for $(R_1,R_2,...,R_l)$, we have $G_{C^*_l}\subseteq G_{D_l}$. On the other hand, since all the rows of $C^*_l$ are linear combinations of rows of $C_l$ with the property that for each row $\underline{\underline{C}}^*$, ${C^*}^{2l}_{l+1}=0$, any row $\underline{\underline{C}}^*$ of $C^*_l$ is a row of $D_l$, and so $G_{D_l}\subseteq G_{C^*_l}$ and the proof is complete.
\end{proof}

Now, we propose that instead of finding $G_{C^*_l}$, we try to find $G_{D_l}$. From \eqref{column}, we know that for every \underline{\underline{D}} in \eqref{D} we have
\begin{align}
&\sum_{\substack{i,\alpha_1\alpha_2,...,\alpha_{j-1},\alpha_{j+1},...,\alpha_l \\ i \neq j}}a_{i,\alpha_1,\alpha_2,...,\alpha_{j-1},1,\alpha_{j+1},...,\alpha_{l}}\nonumber \\
&= \sum_{\alpha_1,\alpha_2,...,\alpha_{j-1},\alpha_{j+1},...,\alpha_l }a_{j,\alpha_1,\alpha_2,...,\alpha_{j-1},1,\alpha_{j+1},...,\alpha_{l}}
\label{So}
\end{align}

Now we need to choose inequations in \eqref{eqqq} whose linear combinations with positive coefficients produce all inequations in \eqref{eqqq}. Without loss of generality, we consider $a_{j,\alpha_1,\alpha_2,...,\alpha_{l}}\in Z^+$. We give the following definition as in \cite{fifteen};
\\\textbf{Definition:} We call a set of vectors H, generator vectors of S, if every vector in S can be expressed as a linear combination of vectors in H with positive coefficients and no vector in H may be expressed in such a way by other vectors in H. Such a set of vectors is called the Hilbert Basis of the vector space S.

The problem of finding a Hilbert Basis for the solutions of a Diophantine system of equations such as \eqref{So} has many solutions. One solution is to use an algorithm called Normaliz introduced by Koch in \cite{ten}. We use this method to solve \eqref{So}. If the set of $a_{j,\alpha_1,\alpha_2,...,\alpha_{l}}$, $j=1,...,l$,$\alpha_i\in\{0,1\}$ is an answer for \eqref{So}, then we have the following inequation for $R_1,...,R_l$:
\begin{equation*}
\sum_{j=1}^l ((\sum_{\alpha_i} a_{j,\alpha_1,\alpha_2,...,\alpha_{l}})R_j)\leq \sum_{\alpha_i,j} a_{j,\alpha_1,\alpha_2,...,\alpha_{l}}I_{j,\alpha_1,\alpha_2,...,\alpha_l}
\end{equation*}
Since Normaliz provides a Hilbert Basis solution for the equations in \eqref{So}, and since each solution for \eqref{So} represents a unique equation in \eqref{eqqq} and adding two answers from \eqref{So} with positive coefficients is equivalent to adding two inequations in \eqref{eqqq} with positive coefficients, the resulting inequations contain all the non-redundant inequations in \eqref{eqqq}. Hence, by applying Normaliz, we can calculate the solution to our original problem. Since the dimension of the Diophantine space in \eqref{So} is much greater than the number of constraints we have, it is better to use Normaliz Dual as suggested in \cite{eleven}. Our results for the two sender-receiver IC are the same as HK rate region in \cite{four}.

 To this point, we have found the rate region by eliminating the auxiliary variables. Since, in the case of the above channel, all the mutual-informations $I_{l(i,\alpha_1,\alpha_2,...,\alpha_l)}$ are independent of each other  (i.e. they don't have a clear relation with each other), there are no redundant inequations in our final system of inequations and so by the Caratheodory theorem \cite{sixteen} Q is bounded by the number of resulting inequations after applying Normaliz and finding the dual system.
\section{Generalization to Other Problems}
The method proposed above may be used in any other problem which could be solved using the FME. The following algorithm is used:
\\1- Define the encoding and decoding schemes.
\\2- Find the inequations matrix corresponding to the defined schemes.
\\3- Write the Diophantine equations of the matrix.
\\4- Use Normaliz to solve the Diophantine problem.
\\5- For each answer vector for the Diophantine problems, write the corresponding inequation.

\section{Simulation and Results}
Here, we use our method to solve some problems which have been solved using the FME method previously. The simulation results are illustrated in \eqref{tab:SystemParameters}:
\begin{table}[ht]
	\centering
	\caption{Results of Applying Our New Method on Some Examples}
		\begin{tabular}{| c | c | c | c | c | c |}\hline
Name& Number of&Number of&Basis&Non-Redundant&t\\&constraints&  aux. variables &Elements&Elements&(sec)\\\hline
		HK &8&2&7&7&$\leq 1s$\\l=2&&&&&\\\hline
		HK &24&3&153&153&$\leq 1s$\\l=3&&&&&\\\hline
		HK &64&4&56384&56384&203s\\l=4&&&&&\\\hline
		Ref. 15&29&6&286&60& $\leq$1s\\&&&&&\\\hline
		Ref. 16&12&5&6&3&$\leq$1s\\&&&&&\\\hline			
\end{tabular}		
	\label{tab:SystemParameters}
\end{table}

 The results are obtained from running Normaliz 2.2 on a 2.4 i5 CPU. The 4th column indicates the number of inequations resulting from applying our method while the 5th column indicates the number of non-redundant constraints. Note that in cases such as \cite{thirteen} and \cite{fourteen} where there are certain relations between mutual-informations on the right side of the inequations, there may be redundant answers in our method. However all these redundant answers will be present in the FME case as well. Table \eqref{tab:SystemParameters} shows the efficiency of our method in comparison to the FME method. There are several features of our method that make it more efficient in comparison with the former method. One feature is that the FME method eliminates one variable at a time, while our method eliminates all variables in one step. Another feature is that our method is much faster compared to the previous method. Another important advantage of the new method is that it doesn't introduce many of the redundant inequation present in the FME solution. Note that one of the most time-consuming parts of the FME method is the removal of inequations which are linear combinations of previous inequations with positive coefficients. This process uses an exhaustive search and is very inefficient. Our method doesn't introduce any such redundant answers.

\section{Conclusion}
 In this paper, we introduced a new method for performing variable elimination in systems of inequations. We illustrated the implementation of our new method on a special case of IC. A generalization of the method was presented and in the end some simulation results were compared with the actual results after applying FME and removing redundant answers. It was shown that our new method is much faster and more efficient than the FME method. Our method eliminates all variables in one step while the FME method uses a step by step approach. Also, our method produces fewer redundant inequations in comparison to the FME method.

% conference papers do not normally have an appendix

% use section* for acknowledgement
\section*{Acknowledgments}

We would like to express our appreciation for the ISSL Group of Sharif
University of Technology and specially B. Akhbari and M. Mirmohseni for their
valuable comments.

% trigger a \newpage just before the given reference
% number - used to balance the columns on the last page
% adjust value as needed - may need to be readjusted if
% the document is modified later
%\IEEEtriggeratref{8}
% The "triggered" command can be changed if desired:
%\IEEEtriggercmd{\enlargethispage{-5in}}

% references section

% can use a bibliography generated by BibTeX as a .bbl file
% BibTeX documentation can be easily obtained at:
% http://www.ctan.org/tex-archive/biblio/bibtex/contrib/doc/
% The IEEEtran BibTeX style support page is at:
% http://www.michaelshell.org/tex/ieeetran/bibtex/
%\bibliographystyle{IEEEtran}
% argument is your BibTeX string definitions and bibliography database(s)
%\bibliography{IEEEabrv,../bib/paper}
%
% <OR> manually copy in the resultant .bbl file
% set second argument of \begin to the number of references
% (used to reserve space for the reference number labels box)

% that's all folks
\end{document}